\newtheorem{definition}{Definition}[section]
\newtheorem{theorem}{Theorem}[section]
\newtheorem{lemma}{Lemma}[section]
\newtheorem{corollary}[theorem]{Corollary}
\newtheorem{proposition}{Proposition}
\newcommand{\st}{^*}
\newcommand{\mk}{\mathfrak}
\newcommand{\mc}{\mathcal}
\newcommand{\sh}{^{\sharp}}
\newcommand{\R}{\mathbb{R}}
\begin{document}


\title[]{Uniqueness of the Momentum map}

\author{Chiara Esposito} 
\address{Chiara Esposito, Department of Mathematical Sciences,
Universitetsparken 5,
DK-2100 Copenhagen {\O},
Denmark}
\email{esposito.chiar@gmail.com}

\author{Ryszard Nest}
\address{Ryszard Nest, Department of Mathematical Sciences,
Universitetsparken 5,
DK-2100 Copenhagen {\O},
Denmark}
\email{rnest@math.ku.dk}

\thanks{The first author was supported by the Danish National Research Foundation (DNRF) through the Centre for Symmetry and Deformation}

\maketitle

\begin{abstract}
We give a detailed discussion of existence and uniqueness of Lu's momentum map. We introduce the infinitesimal momentum map, and analyze its integrability to the usual momentum map, its existence and its  deformations.
\end{abstract}

\vskip 1 cm

\tableofcontents
\section{Introduction}

The classical momentum map for an action of a Lie group on a Poisson manifold provides a mathematical
formalization of the notion of conserved
quantity associated to symmetries of a dynamical system. The standard definition of momentum map  only requires a canonical Lie algebra action and its existence is guaranteed whenever the infinitesimal generators of the Lie algebra action are
Hamiltonian vector fields (modulo vanishing of a certain Lie algebra cohomology class). In this paper we  focus on a generalization of the momentum map provided by Lu \cite{Lu3}, \cite{Lu1}.

The detailed construction of this generalized momentum map and its basic properties are recalled in the 
following section. The basic structure is as follows. Given  Poisson Lie group $(G,\pi_G )$ one introduces 
the dual Poisson Lie group $(G^*,\pi_{G^*})$ and, under fairly general conditions, $G^*$ carries a Poisson 
action of $G$ (and vice versa). The Lie algebra ${\mathfrak g}$ of $G$ is naturally identified with the 
space of $G^*$-(left-)invariant one-forms on $G^*$:
$$
	\alpha : {\mathfrak g}\to \Omega^1 (G^* )^{G^*},
$$
Given a Poisson manifold $(M,\pi)$ with a Poisson action of $G$, a momentum map is a smooth, Poisson map
$$
	\boldsymbol{\mu} : M \to G^*
$$
satisfying
$$
	X_\xi = \pi^\sharp (\boldsymbol{\mu}^*(\theta_{\xi}))
$$
where $X$ is the map ${\mathfrak g} \to Vect (M)$ induced by the action of $G$ on $M$. A canonical example of a momentum map is the identity
map $G^*\to G^*$, in which case $\alpha$ coincides with the structure one-form in $\theta \in \Omega^1 (G^*, {\mathfrak g}^* )^{G^*}$ of the Lie group  $G^*$.

The Poisson structure on $G$ gives its Lie algebra a structure of a Lie bialgebra 
$({\mathfrak g},[\cdot,\cdot], \delta )$ and hence a structure of Gerstenhaber algebra on $\wedge^{\bullet}{\mathfrak g}$. 
On the other hand the Poisson bracket on $M$ gives $\Omega^{\bullet}(M)[1]$ a structure of Lie algebra with
bracket $[\cdot,\cdot]_{\pi}$ which induces a structure of Gerstenhaber algebra on $\Omega^{\bullet}(M)$. The map $\alpha$ from above lifts to a morphism of Gerstenhaber algebras
$$
 	\alpha : (\wedge^{\bullet}\mathfrak{g} ,\delta, [\;,\;])\longrightarrow (\Omega^{\bullet} (M), d_{dR},[\;,\;]_\pi ).
$$
which we will call an {\em infinitesimal momentum map} (cf. the subsection \ref{sub: structure} and the proposition \ref{prop:gerst}). The existence of the infinitesimal momentum map has been discussed in \cite{Gi1}. The main subject of this paper is the study of the properties of this infinitesimal momentum map and its relation to the usual momentum map. In particular, we show under which conditions it integrates to the usual momentum map.

The fact that $\alpha$ is  map of Gerstenhaber algebras reduces to two equations
$$
	\alpha_{[\xi,\eta]}=[\alpha_{\xi},\alpha_\eta ]_{\pi} \quad\mbox{ and }\quad
	d\alpha_{\xi} +\frac{1}{2}\alpha\wedge\alpha\circ\delta(\xi)=0
$$
The second is a Maurer-Cartan type equation, in fact, in the case when $M=G^*$ it is precisely the Maurer Cartan equation for the Lie group $G^*$. In the case when $\Omega^\bullet$ is formal, the second equation admits explicit solution modulo gauge equivalence (cf. Theorem  \ref{thm:formal}).

\medskip

\noindent {\bf Theorem} {\em Suppose that $M$ is is a K\"{a}hler manifold. The set of gauge equivalence classes of $\alpha\in \Omega^1 (M,\mk{g}^*)$ satisfying the equation
\begin{equation}
	d\alpha_{\xi} +\frac{1}{2}\alpha\wedge\alpha\circ\delta(\xi)=0	
\end{equation}
is in bijective correspondence with the set of the cohomology classes $c\in H^1(M,\mk{g}^*)$ satisfying
\begin{equation}
	[c,c]=0.
\end{equation}}

\medskip

The following describes conditions under which an infinitesimal momentum map integrates to the usual momentum map (cf. Theorem \ref{thm: rec} for the details). 

\medskip

\noindent {\bf Theorem }{\em
Let $(M,\pi )$ be a Poisson manifold and $\alpha : {\mathfrak g}\to \Omega^1 (M)$ an infinitesimal momentum map. Suppose that $M$ and $G$ are simply connected and $G$ is compact. Then 
 ${\mathcal D} = \{ \alpha_{\xi}-\theta_{\xi},\ \xi\in {\mathfrak g}\}$ generates an involutive distribution  on $M \times G^*$
  and a leaf $\boldsymbol{\mu}_{\mathcal{F}}$ of $\mathcal D$ is a graph of a momentum map if
\begin{equation}\label{eq:vp}
	\pi (\alpha_{\xi} ,\alpha_{\eta})-\pi_{G^*}(\theta_{\xi},\theta_{\eta})|_{\mathcal F} = 0 , \quad \xi ,\eta \in {\mathfrak g}
\end{equation}
}

\medskip

In  the section \ref{sec: rec} we study concrete cases of  this globalization question and
prove the existence and uniqueness/nonuniqueness of a momentum map associated to a given infinitesimal momentum map for the particular case when the dual Poisson Lie group is abelian, respectively the Heisenberg group. For the second case the result is as follows (cf. Theorem \ref{thm:heisenberg}).

\medskip

\noindent {\bf Theorem} {\em
Let $G$ be a Poisson Lie group acting on a Poisson manifold $M$ with an infinitesimal momentum map $\alpha$ and such that $G^*$ is the Heisenberg group.
Let $\xi,\eta,\zeta$ denote the basis of $\mathfrak{g}$ dual to the standard basis $x,y,z$ of $\mathfrak{g}^*$, with $z$ central and $[x,y]=z$. Then
\begin{equation}
\pi(\alpha_{\xi}, \alpha_{\eta})=c
\end{equation}
where $c$ is a constant on $M$. The form $\alpha$ lifts to a momentum map $\boldsymbol{\mu}:M\rightarrow G^*$ if and only if $c=0$. When $c=0$ the set of momentum maps with given $\alpha$ is one dimensional with free
transitive action of $\R$.
}

\medskip

Finally, in the last section, we study the question of  infinitesimal deformations of a given momentum map. The main result is Theorem \ref{thm: idef}, which describes explicitly the space tangent to the space of momentum maps at a given point. The main result can be formulated as a statement that the space of momentum maps has a structure of flat manifold (in an appropriate $C^\infty$ topology).

\medskip

\noindent {\bf Theorem} {\em 
Infinitesimal deformations of a momentum map are given by smooth maps $H:M\rightarrow {\mathfrak g}^*$ satisfying the equations

\noindent For all $\xi, \eta \in \mathfrak g$,
\begin{align}
	\label{eq: inf1}   X_{\xi} H (\eta)-X_{\eta} H (\xi) & = H( [\xi,\eta])\\
	\label{eq: inf2}   \{ H(\xi),\ \cdot \} & = -X_{ad^*_H\xi}.
\end{align}
}

This theorem has  the following corollary (cf. Corollary \ref{cor:unique}).
 
 \medskip
 
 \noindent{\bf Theorem} {\em Suppose that $G$ is a compact and semisimple Poisson Lie group
with Poisson action  on a Poisson manifold $M$ and with a momentum map $\boldsymbol \mu$. Any smooth deformation of $\boldsymbol \mu$ is  given by integrating a Hamiltonian flow on $M$ commuting with the action of $G$.}

\section{Preliminaries: Poisson actions and Momentum maps}\label{sec: pam}

In this section we give a brief summary of the notions of Poisson action and momentum map in the Poisson context. We discuss the dressing transformations as an example of Poisson actions that will allow us to introduce the concept of Hamiltonian action.

Recall that a Poisson Lie group $(G,\pi_G)$ is a Lie group equipped with a multiplicative Poisson structure $\pi_G$. From the Drinfeld theorem \cite{Drinfeld}, given a Poisson Lie group $(G,\pi_G)$, the linearization of $\pi_G$ at $e$ defines a Lie algebra structure on $\mathfrak{g}^*$ such that $(\mathfrak{g},\mathfrak{g}^*)$ form a Lie bialgebra over $\mathfrak{g}$. For this reason, in the following we always assume that $G$ is connected and simply connected.
\begin{definition}
 The action of $(G,\pi_G)$ on $(M,\pi)$ is called \textbf{Poisson action} if the map $\Phi:G\times M\rightarrow M$ is Poisson, where $G\times M$ is a Poisson  product with structure $\pi_G\oplus\pi$
\end{definition}

Given an action $\Phi:G\times M\to M$, we denote by $X_{\xi}$ the Lie algebra anti-homomorphism from $\mathfrak{g}$ to $M$ which defines the infinitesimal generator of this action.

\begin{proposition}
Assume that $(G,\pi_G)$ is a connected Poisson Lie group. Then the action $\Phi: G\times M\to M$ is a Poisson action if and only if
\begin{equation}\label{eq: pa}
	\mathcal{L}_{X_{\xi}}(\pi) = (\Phi\wedge\Phi)(\delta(\xi)),
\end{equation}
for any $\xi\in\mathfrak{g}$, where $\delta=d_e\pi_G:\mathfrak{g}\to \mathfrak{g}\wedge \mathfrak{g}$
is the derivative of $\pi_G$ at $e$.
\end{proposition}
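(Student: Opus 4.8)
The plan is to reduce the global Poisson condition on $\Phi$ to its infinitesimal consequence by differentiating at the identity, and then to recover the global statement from the infinitesimal one using connectedness of $G$. Recall that $\Phi:(G\times M,\pi_G\oplus\pi)\to(M,\pi)$ is Poisson precisely when, for all $f_1,f_2\in C^\infty(M)$ and all $(g,m)$, writing $F_i=f_i\circ\Phi$, one has $\{F_1,F_2\}_{G\times M}=\{f_1,f_2\}_M\circ\Phi$. First I would split the bracket on the product, $\{F_1,F_2\}_{G\times M}=\{F_1,F_2\}^G+\{F_1,F_2\}^M$, according to whether $\pi_G$ or $\pi$ is used. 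Denoting $\Phi_g=\Phi(g,\cdot)$, the $M$-part is $\pi(m)(\Phi_g^*df_1,\Phi_g^*df_2)=\big((\Phi_g)_*\pi(m)\big)(df_1,df_2)$ at $\Phi_g(m)$, while the $G$-part is $\pi_G(g)\big(d_gF_1,d_gF_2\big)$, the differentials being taken in the $G$-variable at fixed $m$. Pulling the whole identity back along $\Phi_g$ to the point $m$ yields the equivalent form
\begin{equation}\label{eq:pull}
	\Phi_g^*\pi - \pi = \Phi_g^*\!\left[\pi_G(g)\big(d_gF_1,d_gF_2\big)\right],
\end{equation}
an identity of bivectors at $m$ that holds for all $g$ iff $\Phi$ is Poisson.

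Next I would differentiate \eqref{eq:pull} at $g=e$ along a curve $\exp(t\xi)$. On the left-hand side $\frac{d}{dt}\big|_0\Phi_{\exp(t\xi)}^*\pi=\mathcal{L}_{X_\xi}\pi$, directly by definition of the Lie derivative, since $\Phi_{\exp(t\xi)}$ is the flow of $X_\xi$. On the right-hand side, multiplicativity of $\pi_G$ forces $\pi_G(e)=0$, and $\Phi_e=\mathrm{id}$, so only the derivative of $\pi_G$ survives: $\frac{d}{dt}\big|_0\pi_G(\exp(t\xi))=\delta(\xi)\in\wedge^2\mathfrak{g}$ by the very definition $\delta=d_e\pi_G$. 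The covectors $d_gF_i$ evaluated at $g=e$ are the elements of $\mathfrak{g}^*$ sending $\zeta\mapsto df_i(X_\zeta(m))$, i.e.\ the transposes of the infinitesimal action $\zeta\mapsto X_\zeta(m)$. Feeding $\delta(\xi)$ through these two slots reproduces exactly $\big((\Phi\wedge\Phi)\delta(\xi)\big)(df_1,df_2)$, so differentiation of \eqref{eq:pull} gives $\mathcal{L}_{X_\xi}\pi=(\Phi\wedge\Phi)(\delta(\xi))$. This establishes the ``only if'' direction.

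For the converse I would show that the infinitesimal identity \eqref{eq: pa} propagates to \eqref{eq:pull} for every $g$, using that $G$ is connected and generated by exponentials from the identity. The cleanest route is to regard the failure of \eqref{eq:pull} as a function of $g$, show it vanishes at $g=e$, and compute its derivative along $\exp(t\xi)$ from any base point $g$ using the composition law $\Phi_{g\exp(t\xi)}=\Phi_g\circ\Phi_{\exp(t\xi)}$ together with the cocycle (multiplicativity) identity $\pi_G(gh)=(L_g)_*\pi_G(h)+(R_h)_*\pi_G(g)$. Inserting the hypothesis $\mathcal{L}_{X_\xi}\pi=(\Phi\wedge\Phi)(\delta(\xi))$ shows this derivative is governed by a linear relation with vanishing initial data, so the defect stays zero along every such curve and hence on all of the connected group $G$; by density of products of exponentials this gives \eqref{eq:pull} globally, i.e.\ $\Phi$ is Poisson.

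I expect the genuine obstacle to be this converse bookkeeping: the careful matching of the multiplicativity cocycle for $\pi_G$ against the two-variable differentiation of $\Phi_{gh}=\Phi_g\circ\Phi_h$, so that the infinitesimal condition integrates without stray terms. The sign arising from $X$ being a Lie algebra anti-homomorphism must be tracked here, but it is absorbed consistently on both sides of \eqref{eq: pa}. The forward direction, by contrast, is a direct computation once \eqref{eq:pull} is isolated.
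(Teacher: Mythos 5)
The paper gives no proof of this proposition at all---it defers to Lu--Weinstein \cite{LW}---and your two-step argument (recasting the Poisson condition as the pointwise identity $\pi(gm)=(\Phi_g)_*\pi(m)+(\Phi^m)_*\pi_G(g)$ with $\Phi^m(g)=\Phi(g,m)$, differentiating at $g=e$ for necessity, then propagating over the connected group via the multiplicativity cocycle for sufficiency) is precisely the standard proof from that reference, and it is sound. The only point worth making explicit in your converse is that the defect $D_m(g)=\pi(gm)-(\Phi_g)_*\pi(m)-(\Phi^m)_*\pi_G(g)$ satisfies $D_m(gh)=D_{hm}(g)+(\Phi_g)_*D_m(h)$, so it must be propagated as a bivector \emph{field} on $M$ (the base point $m$ moves along the curve $t\mapsto g\exp(t\xi)$) rather than pointwise in $m$; with that reading your ``linear relation with vanishing initial data'' does close up, and since products of exponentials actually generate all of a connected group (not merely a dense subset), the vanishing is global.
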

The proof of this Proposition can be found in \cite{LW}. Motivated by this fact, we introduce the following definition.

\begin{definition}
A Lie algebra action $\xi\mapsto X_{\xi}$ is called an \textbf{infinitesimal Poisson action} of the Lie bialgebra $(\mathfrak{g},\delta)$ on $(M,\pi)$ if it satisfies eq. (\ref{eq: pa}).
\end{definition}

In this formalism the definition of momentum map reads (Lu, \cite{Lu3}, \cite{Lu1}):

\begin{definition}\label{def: mm}
A \textbf{momentum map} for the Poisson action $\Phi:G\times M\to M$ is a map $\boldsymbol{\mu}: M\rightarrow G^*$ such that
\begin{equation}\label{eq: mmp}
	X_{\xi} = \pi^{\sharp}(\boldsymbol{\mu}^*(\theta_{\xi}))
\end{equation}
where $\theta_{\xi}$ is the left invariant 1-form on $G^*$ defined by the element $\xi\in\mathfrak{g}=(T_e G^*)^*$ and $\boldsymbol{\mu}^*$ is the cotangent lift $T^* G^*\rightarrow T^*M$.
\end{definition}

\subsection{Dressing Transformations}\label{sec: dressing}

One of the most important example of Poisson action is the dressing action of $G$ on $G^*$. Consider a Poisson Lie group $(G,\pi_G)$, its dual $(G^*,\pi_{G^*})$ and its double $\mathcal{D}$, with Lie algebras
$\mathfrak{g}$, $\mathfrak{g}^*$ and $\mathfrak{d}$, respectively.

Let $L_{\xi}$ the vector field on $G^*$ defined by
\begin{equation}\label{eq: idr}
	L_{\xi} = \pi_{G\st}\sh(\theta_{\xi})
\end{equation}
for each $\xi\in\mathfrak{g}$. Here $\theta_{\xi}$ is the left invariant 1-form on $G^*$ defined by $\xi\in\mathfrak{g}=(T_eG^*)^*$. The map $\xi\mapsto L_{\xi}$ is a Lie algebra anti-homomorphism. Using the Maurer-Cartan equation for $G\st$:
\begin{equation}\label{eq: mct}
    d\theta_{\xi} + \frac{1}{2}\theta\wedge\theta\circ\delta(\xi) = 0
\end{equation}
The action $\xi\mapsto L_{\xi}$ is an infinitesimal Poisson action of the Lie bialgebra $\mathfrak{g}$ on the Poisson Lie group $G^*$, called left infinitesimal dressing action (see, for example, \cite{Lu3}).
Similarly, the right infinitesimal dressing action of $\mathfrak{g}$ on $G^*$ is defined by $R_{\xi}=-\pi_{G\st}\sh(\theta_{\xi})$
where $\theta_{\xi}$ is the right invariant 1-form on $G^*$.

Let $L_{\xi}$ (resp. $R_{\xi}$) a left (resp. right) dressing vector field on $G^*$. If all the dressing vector fields are complete, we can integrate the $\mathfrak{g}$-action into a Poisson $G$-action on $G^*$ called the
\textbf{dressing action} and we say that the dressing actions consist of dressing transformations. The orbits of the dressing actions are precisely the symplectic leaves in $G$ (see \cite{We1}, \cite{Lu3}).

The momentum map for the dressing action of $G$ on $G^*$ is the opposite of the identity map from $G^*$ to itself.

\begin{definition}
A multiplicative Poisson tensor $\pi$ on $G$ is complete if each left (equiv. right) dressing vector field is complete on $G$.
\end{definition}
It has been proved in \cite{Lu3} that a Poisson Lie group is complete if and only if its dual Poisson Lie group is complete.
Assume that $G$ is a complete Poisson Lie group. We denote respectively the left (resp. right) dressing action of $G$ on its dual $G^*$ by $g\mapsto L_g$ (resp. $g\mapsto R_g$).

\begin{definition}
A momentum map $\boldsymbol{\mu}:M\rightarrow G^*$ for a left (resp. right) Poisson action $\Phi$ is called \textbf{G-equivariant} if it is such with respect to the left dressing action of $G$ on $G^*$, that is,
$\boldsymbol{\mu}\circ \Phi_g = L_g\circ \boldsymbol{\mu}$ (resp. $\boldsymbol{\mu}\circ \Phi_g = R_g\circ \boldsymbol{\mu}$)
\end{definition}
A momentum map is $G$-equivariant if and only if it is a Poisson map, i.e. $\boldsymbol{\mu}_*\pi=\pi_{G^*}$. Given this generalization of the concept of equivariance introduced for Lie group actions, it is natural to call \textbf{Hamiltonian action} a Poisson action induced by an equivariant momentum map.

\section{The infinitesimal momentum map}\label{sec: uni}

In this section we study the conditions for the existence and the uniqueness of the momentum map. 
In particular, we give a new definition of the momentum map, called infinitesimal, in terms of one-forms 
and we study the conditions under which the infinitesimal momentum map determines a momentum map in the 
usual sense. We describe the theory of reconstruction of the momentum map from the infinitesimal one in two 
explicit cases. Finally, we provide the conditions which ensure the uniqueness of the momentum map.

\subsection{The structure of a momentum map}
\label{sub: structure}

Recall that, for the Poisson Lie group $G\st$ we identify $\mk{g}$ with the space of left invariant 1-forms 
on $G\st$;  this space is closed under the bracket defined by $\pi_{G\st}$ and the induced bracket on 
$\mk{g}$, by the above identification, coincides with the original Lie bracket on $\mk{g}$ (see \cite{We}).

\begin{proposition}
Let $\theta_{\xi},\theta_{\eta}$ be two left invariant 1-forms on $G\st$, such that $\theta_\xi(e)=\xi$, $\theta_{\eta}(e)=\eta$ then
\begin{equation}\label{eq: theta}
	\theta_{[\xi,\eta]} = [\theta_{\xi},\theta_{\eta}]_{\pi_{G\st}}
\end{equation}
and
\begin{equation}\label{eq: theta2}
 	\mc{L}_X\pi_{G\st}(\theta_{\xi},\theta_{\eta}) = x([\xi,\eta]) + \pi_{G^*}(\theta_{ad^*_x \xi},\theta_{\eta}) + \pi_{G^*}(\theta_{\xi},\theta_{ad^*_x\eta})
\end{equation}

\end{proposition}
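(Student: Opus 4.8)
The proof rests on three standard ingredients: the Koszul bracket on one-forms $[\alpha,\beta]_{\pi} = \mathcal{L}_{\pi^\sharp\alpha}\beta - \mathcal{L}_{\pi^\sharp\beta}\alpha - d(\pi(\alpha,\beta))$, the multiplicativity of $\pi_{G^*}$, and Drinfeld's theorem \cite{Drinfeld} identifying the linearization $d_e\pi_{G^*}$ with the Lie bracket of $\mathfrak{g}=(T_eG^*)^*$. The two equations are of different nature --- the first an identity of one-forms, the second an identity of functions --- but both are proved by the same two-step method: compute the left-hand side from the defining formulas, then use multiplicativity to reduce a global statement to its value at the identity $e$, where $\pi_{G^*}$ vanishes.

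For (\ref{eq: theta}) I would first show that the Koszul bracket of two left-invariant one-forms is again left-invariant. This is where multiplicativity enters: writing $\pi_{G^*}(gh)=(L_g)_*\pi_{G^*}(h)+(R_h)_*\pi_{G^*}(g)$ and using that $\theta_\xi,\theta_\eta$ and the dressing fields $L_\xi=\pi_{G^*}^{\sharp}\theta_\xi$ transform compatibly under left translation, one checks that $L_g^*[\theta_\xi,\theta_\eta]_{\pi_{G^*}}=[\theta_\xi,\theta_\eta]_{\pi_{G^*}}$. Granting this, it suffices to compare both sides of (\ref{eq: theta}) at $e$. Since $\pi_{G^*}(e)=0$ the dressing vector fields vanish at $e$, so the first two terms of the Koszul bracket contribute only through $d(\theta_\eta(\pi^\sharp\theta_\xi))$, and a short computation gives $[\theta_\xi,\theta_\eta]_{\pi_{G^*}}|_e=d_e(\pi_{G^*}(\theta_\xi,\theta_\eta))$. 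By Drinfeld's theorem this differential is precisely $[\xi,\eta]\in\mathfrak{g}=T_e^*G^*$, which equals $\theta_{[\xi,\eta]}|_e$. As both sides are left-invariant and agree at $e$, they agree everywhere.

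For (\ref{eq: theta2}) I would expand the Lie derivative of the function $\pi_{G^*}(\theta_\xi,\theta_\eta)$ by the derivation rule,
\begin{equation*}
\mathcal{L}_X\bigl(\pi_{G^*}(\theta_\xi,\theta_\eta)\bigr)=(\mathcal{L}_X\pi_{G^*})(\theta_\xi,\theta_\eta)+\pi_{G^*}(\mathcal{L}_X\theta_\xi,\theta_\eta)+\pi_{G^*}(\theta_\xi,\mathcal{L}_X\theta_\eta),
\end{equation*}
and evaluate the three terms separately for the left-invariant field $X$ with $X(e)=x$. The last two are handled by the Maurer-Cartan equation (\ref{eq: mct}): since $\theta_\xi(X)$ is constant, $\mathcal{L}_X\theta_\xi=\iota_X d\theta_\xi=-\tfrac12\iota_X(\theta\wedge\theta\circ\delta(\xi))$, and contracting out the antisymmetric pair (which cancels the factor $\tfrac12$), together with the duality $\langle\delta\xi,x\wedge w\rangle=\langle\xi,[x,w]\rangle$, identifies this with $\theta_{ad^*_x\xi}$. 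For the first term the key lemma is that $\mathcal{L}_X\pi_{G^*}$ is again left-invariant: using multiplicativity and that the flow of $X$ is right translation by $\exp(tx)$, one finds $(\mathcal{L}_X\pi_{G^*})(g)=(L_g)_*\bigl(\tfrac{d}{dt}\big|_0\pi_{G^*}(\exp tx)\bigr)$, a left-invariant bivector because $\pi_{G^*}(e)=0$ kills the cross terms. Hence $(\mathcal{L}_X\pi_{G^*})(\theta_\xi,\theta_\eta)$ is a constant function, equal to its value at $e$, which the linearization argument of the previous paragraph identifies with $x([\xi,\eta])$. Assembling the three contributions gives (\ref{eq: theta2}).

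The main obstacle is the careful use of multiplicativity in the two left-invariance claims --- that the Koszul bracket of left-invariant forms stays left-invariant, and that $\mathcal{L}_X\pi_{G^*}$ stays left-invariant --- since these are what let a pointwise computation at $e$ propagate to all of $G^*$. The remaining difficulty is purely bookkeeping: fixing compatible conventions for the cobracket $\delta$, the coadjoint action $ad^*$, and the normalization in the Maurer-Cartan equation, so that the signs and the factor $\tfrac12$ come out as in the statement. Once the two invariance properties are in place, everything else reduces to evaluating at $e$ and invoking Drinfeld's identification of $d_e\pi_{G^*}$ with the bracket of $\mathfrak{g}$.
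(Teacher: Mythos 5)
Your proposal is correct and follows essentially the same route as the paper: both rest on the Leibniz expansion of $\mathcal{L}_X\bigl(\pi_{G^*}(\theta_\xi,\theta_\eta)\bigr)$, the identity $\mathcal{L}_X\theta_\xi=\theta_{ad^*_x\xi}$, and the reduction via multiplicativity and left-invariance to the value at $e$, where $\mathcal{L}_X\pi_{G^*}(e)={}^{t}\delta(x)$ gives $x([\xi,\eta])$. The only organizational difference is that the paper derives left-invariance of the Koszul bracket from the single contraction identity $\iota_X[\theta_\xi,\theta_\eta]_{\pi_{G^*}}=(\mathcal{L}_X\pi_{G^*})(\theta_\xi,\theta_\eta)$, whereas you check $L_g^*$-invariance directly and evaluate the bracket at $e$ as $d_e\bigl(\pi_{G^*}(\theta_\xi,\theta_\eta)\bigr)$; in addition you supply proofs of steps the paper merely asserts, which is a strengthening rather than a deviation.
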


\begin{proof}
 Let us consider and element $x\in\mk{g}\st$ and the correspondent left invariant vector field $X\in TG\st$.
Recall that given a Poisson manifold, the Poisson structure always induces a Lie bracket on the space of one-form on the manifold (see \cite{V}) by
\begin{equation}\label{eq: bff}
 	[\alpha,\beta]_{\pi} = \mc{L}_{\pi\sh(\alpha)}\beta-\mc{L}_{\pi\sh(\beta)}\alpha-d(\pi(\alpha,\beta)).
\end{equation}
Using this explicit formula for $[\theta_{\xi},\theta_{\eta}]_{\pi_{G\st}}$
 we can see that
\begin{equation}
	\iota_X [\theta_{\xi},\theta_{\eta}]_{\pi_{G\st}} = (\mc{L}_X\pi_{G\st})(\theta_{\xi},\theta_{\eta}).
\end{equation}
This proves that  $[\theta_{\xi},\theta_{\eta}]_{\pi_{G\st}}$ is a left invariant 1-form. In particular, since $\mc{L}_X\pi_{G\st}(e) = {}^{t}\delta(x)$, eq. (\ref{eq: theta}) is proved\footnote{This relation has already been claimed in \cite{YK}}.
Moreover, we have
\begin{equation}
	\begin{split}
		\mc{L}_X\pi_{G\st}(\theta_{\xi},\theta_{\eta}) & =(\mc{L}_X\pi_{G\st})(\theta_{\xi},\theta_{\eta}) + \pi_{G\st}(\mc{L}_X\theta_{\xi}, \theta_{\eta}) + \pi_{G\st}(\theta_{\xi},\theta_{\eta})\\
		& = {}^{t}\delta(x)(\xi,\eta) + \pi_{G^*}(\theta_{ad^*_x \xi},\theta_{\eta}) + \pi_{G^*}(\theta_{\xi},\theta_{ad^*_x\eta}),
	\end{split}
\end{equation}
since $\mc{L}_X\theta_{\xi} = \theta_{ad^*_x \xi}$. From ${}^{t}\delta(x)(\xi,\eta) = x([\xi,\eta])$, eq. (\ref{eq: theta2}) follows.
\end{proof}

As a direct consequence, recalling that the pullback and the differential commute and using the equivariance of the momentum map, we have the following proposition:
\begin{proposition}\label{prop:gerst}
Given a Poisson action $\Phi:G\times M\to M$ with equivariant momentum map $\boldsymbol{\mu}:M\to G\st$, the forms $\alpha_{\xi}=\boldsymbol{\mu}^*(\theta_{\xi})$ satisfy the following identities:
\begin{align}
	\label{eq: ah}  \alpha_{[\xi,\eta]} & = [\alpha_{\xi},\alpha_\eta ]_{\pi}\\
	\label{eq: mca}  d\alpha_{\xi}  & + \frac{1}{2}\alpha\wedge\alpha\circ\delta(\xi)=0
\end{align}
\end{proposition}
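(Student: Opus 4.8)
The plan is to deduce both identities by pulling back along $\boldsymbol{\mu}$ the structural equations already available on $G\st$: the Maurer--Cartan equation (\ref{eq: mct}) for the second identity, and the bracket identity (\ref{eq: theta}) for the first. Equation (\ref{eq: mca}) needs only the compatibility of pullback with $d$ and $\wedge$, whereas equation (\ref{eq: ah}) uses in an essential way that $\boldsymbol{\mu}$ is a Poisson map --- which, by the discussion in Section \ref{sec: dressing}, is exactly the content of equivariance.

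For (\ref{eq: mca}) I would apply $\boldsymbol{\mu}\st$ to the Maurer--Cartan equation (\ref{eq: mct}) and use $d\,\boldsymbol{\mu}\st=\boldsymbol{\mu}\st d$. Since $\boldsymbol{\mu}\st$ is an algebra homomorphism for the wedge product and $\alpha_\xi=\boldsymbol{\mu}\st\theta_\xi$, writing $\delta(\xi)=\sum_i\xi_i\wedge\eta_i$ gives $\boldsymbol{\mu}\st(\theta\wedge\theta\circ\delta(\xi))=\sum_i\alpha_{\xi_i}\wedge\alpha_{\eta_i}=\alpha\wedge\alpha\circ\delta(\xi)$, and (\ref{eq: mca}) follows at once. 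No use of equivariance is made here.

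For (\ref{eq: ah}) the crucial input is the naturality of the Koszul bracket (\ref{eq: bff}) under Poisson maps: for a Poisson map $\boldsymbol{\mu}:(M,\pi)\to(G\st,\pi_{G\st})$ one has $\boldsymbol{\mu}\st[\beta,\gamma]_{\pi_{G\st}}=[\boldsymbol{\mu}\st\beta,\boldsymbol{\mu}\st\gamma]_\pi$ for all one-forms $\beta,\gamma$ on $G\st$. Granting this and applying $\boldsymbol{\mu}\st$ to (\ref{eq: theta}) yields $\alpha_{[\xi,\eta]}=\boldsymbol{\mu}\st\theta_{[\xi,\eta]}=\boldsymbol{\mu}\st[\theta_\xi,\theta_\eta]_{\pi_{G\st}}=[\alpha_\xi,\alpha_\eta]_\pi$, which is (\ref{eq: ah}). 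I would prove the naturality term by term in (\ref{eq: bff}), the two ingredients being: that the vector field $\pi\sh(\boldsymbol{\mu}\st\beta)$ on $M$ is $\boldsymbol{\mu}$-related to $\pi_{G\st}\sh(\beta)$ on $G\st$, which is a direct unpacking of the Poisson condition $\boldsymbol{\mu}_*\pi=\pi_{G\st}$; and the identity $\pi(\boldsymbol{\mu}\st\beta,\boldsymbol{\mu}\st\gamma)=\boldsymbol{\mu}\st\big(\pi_{G\st}(\beta,\gamma)\big)$, its pointwise counterpart, which together with $d\,\boldsymbol{\mu}\st=\boldsymbol{\mu}\st d$ handles the last term of (\ref{eq: bff}).

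The main obstacle is that $\boldsymbol{\mu}$ need not be a diffeomorphism, so the Lie-derivative terms cannot be transported by pushforward and must instead be treated via $\boldsymbol{\mu}$-relatedness. The fact I would rely on is that if $X\sim_{\boldsymbol{\mu}}Y$ are $\boldsymbol{\mu}$-related, then $\boldsymbol{\mu}\st(\mathcal{L}_Y\omega)=\mathcal{L}_X(\boldsymbol{\mu}\st\omega)$ for every differential form $\omega$; this follows from the intertwining of flows $\boldsymbol{\mu}\circ\psi^X_t=\psi^Y_t\circ\boldsymbol{\mu}$ and the characterization of the Lie derivative as $\tfrac{d}{dt}\big|_{0}$ of the pulled-back flow. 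Applying this with $X=\pi\sh(\boldsymbol{\mu}\st\beta)$ and $Y=\pi_{G\st}\sh(\beta)$ (and symmetrically in $\gamma$) accounts for the first two terms of (\ref{eq: bff}), and summing the three contributions gives the naturality identity, completing the proof of (\ref{eq: ah}).
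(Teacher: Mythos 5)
Your proposal is correct and takes essentially the same route as the paper: the paper obtains (\ref{eq: mca}) by pulling back the Maurer--Cartan equation (\ref{eq: mct}) using $\boldsymbol{\mu}^*d = d\boldsymbol{\mu}^*$, and (\ref{eq: ah}) by pulling back (\ref{eq: theta}), using that equivariance makes $\boldsymbol{\mu}$ a Poisson map. The only difference is that you spell out the naturality of the bracket (\ref{eq: bff}) under Poisson maps via $\boldsymbol{\mu}$-relatedness of $\pi^{\sharp}(\boldsymbol{\mu}^*\beta)$ and $\pi_{G^*}^{\sharp}(\beta)$, a standard fact the paper leaves implicit in the phrase ``using the equivariance of the momentum map.''
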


This motivates the following Definition.
\begin{definition}\label{def: inf}
Let $M$ be a Poisson manifold and $G$ a Poisson Lie group. An \textbf{infinitesimal momentum map} is a morphism of Gerstenhaber algebras
\begin{equation}\label{eq: imm}
 	\alpha : (\wedge^{\bullet}\mathfrak{g} ,\delta, [\;,\;])\longrightarrow (\Omega^{\bullet} (M), d_{DR},[\;,\;]_\pi ).
\end{equation}
\end{definition}

The following theorem describes the conditions in which an infinitesimal momentum map determines a momentum map in the usual sense.

\begin{theorem}\label{thm: rec}
Let $(M,\pi )$ be a Poisson manifold and $\alpha:{\mathfrak g}\to \Omega^1 (M)$ a linear map which satisfies the
 conditions (\ref{eq: ah})-(\ref{eq: mca}). Then:
\begin{itemize}
  \item[(i)] The set $\{ \alpha_{\xi}-\theta_{\xi},\, \xi\in {\mathfrak g}\}$ generate an involutive distribution $\mathcal D$ on $M \times G^*$.
   \item[(ii)] If $M$ is connected and simply connected, the leaves $\mathcal{F}$ of  $\mathcal D$ coincide with the
    graphs of the maps $\boldsymbol{\mu}_{\mathcal{F}}: M\to G^*$ satisfying
$\alpha = \boldsymbol{\mu}^*_{\mathcal{F}} (\theta)$ and $G^*$ acts freely and transitively on the space of leaves by left multiplication on the second factor.
    \item[(iii)] The vector fields $\pi^{\sharp} (\alpha_{\xi})$ define a homomorphism from $\mathfrak{g}$ to $TM$.
If  they integrate to the action $\Phi : G\times M\to M$ (e.g. when $M$ is compact and $G$ simply connected), then $\Phi$ is a Poisson action
and $\boldsymbol{\mu}_{\mathcal{F}}$ is its momentum map if and only if the functions
\begin{equation}\label{eq:vp}
	\varphi (\xi,\eta) = \pi (\alpha_{\xi} ,\alpha_{\eta}) - \pi_{G^*}(\theta_{\xi},\theta_{\eta})
\end{equation}
satisfy
\begin{equation}\label{eq:vanishing}
	\varphi (\xi,\eta)\vert_{\mathcal{F}}=0
\end{equation}
for all $\xi,\eta\in \mathfrak{g}$.
\end{itemize}

\end{theorem}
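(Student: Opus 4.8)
The plan is to handle the three parts in turn; the heart of the argument is a Frobenius computation on $M\times G\st$ followed by a completeness argument coming from the Lie group structure of $G\st$. For part (i), I regard the $\alpha_\xi$ as pulled back from $M$ and the $\theta_\xi$ as pulled back from $G\st$, so that $\mathcal D$ is the common kernel of the $1$-forms $\{\alpha_\xi-\theta_\xi\}$ on the product. By Frobenius it suffices to show the ideal they generate is differentially closed. Subtracting the two Maurer--Cartan equations (\ref{eq: mca}) and (\ref{eq: mct}) gives
\begin{equation*}
 d(\alpha_\xi-\theta_\xi)=-\tfrac12(\alpha\wedge\alpha-\theta\wedge\theta)\circ\delta(\xi),
\end{equation*}
and the elementary identity $\alpha_a\wedge\alpha_b-\theta_a\wedge\theta_b=(\alpha_a-\theta_a)\wedge\alpha_b+\theta_a\wedge(\alpha_b-\theta_b)$ shows each resulting term lies in the ideal generated by the $\alpha_c-\theta_c$. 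Hence $\mathcal D$ is involutive.

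For part (ii), the key observation is that, since the covectors $\theta_\xi(g)$ form a basis of $T_g\st G\st$, a tangent vector $(v,w)$ lies in $\mathcal D$ if and only if $w$ is the \emph{unique} vector with $\theta_\xi(w)=\alpha_\xi(v)$ for all $\xi$; thus $\mathcal D$ is the horizontal distribution of a connection on the trivial bundle $M\times G\st\to M$, and involutivity is exactly flatness. I would then observe that the horizontal lift of a path $c$ in $M$ is governed by the left-invariant, time-dependent equation $g\inv\dot g=\alpha(\dot c)\in\mathfrak g\st=T_eG\st$ on the Lie group $G\st$, which is complete; so every path lifts and the projection $p\colon\mathcal F\to M$ is a covering map. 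As $M$ is connected and simply connected, $p|_{\mathcal F}$ is a diffeomorphism, so $\mathcal F$ is the graph of $\boldsymbol{\mu}_{\mathcal F}=\mathrm{pr}_{G\st}\circ(p|_{\mathcal F})\inv$, and the defining condition $\alpha_\xi=\theta_\xi$ along $\mathcal F$ becomes $\alpha=\boldsymbol{\mu}_{\mathcal F}\st(\theta)$. Finally, left multiplication $h\cdot(x,g)=(x,hg)$ preserves each $\theta_\xi$, hence $\mathcal D$, and so permutes the leaves; any two leaves are graphs of maps with the same Darboux derivative $\alpha$, which on the simply connected $M$ differ by a unique constant left translation, giving the free and transitive action on the space of leaves.

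For part (iii), the homomorphism property is immediate: $\pi\sh$ intertwines the bracket (\ref{eq: bff}) on $1$-forms with the commutator of vector fields, so by (\ref{eq: ah}) one has $[\pi\sh\alpha_\xi,\pi\sh\alpha_\eta]=\pi\sh\alpha_{[\xi,\eta]}$. That the integrated $\Phi$ is Poisson I would obtain from the Maurer--Cartan equation (\ref{eq: mca}) alone: the same computation that shows the dressing fields $\pi_{G\st}\sh(\theta_\xi)$ form an infinitesimal Poisson action shows $\mathcal L_{X_\xi}\pi=(X\wedge X)\delta(\xi)$, and integrating yields condition (\ref{eq: pa}). It then remains to identify the momentum-map condition. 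By part (ii) the generating relation $X_\xi=\pi\sh(\boldsymbol{\mu}_{\mathcal F}\st\theta_\xi)$ holds automatically, so the real content is equivariance, namely that $\boldsymbol{\mu}_{\mathcal F}$ be a Poisson map. Testing $(\boldsymbol{\mu}_{\mathcal F})_*\pi=\pi_{G\st}$ on the coframe $\{\theta_\xi\}$ reads $\pi(\alpha_\xi,\alpha_\eta)=\pi_{G\st}(\theta_\xi,\theta_\eta)\circ\boldsymbol{\mu}_{\mathcal F}$ along $\mathcal F$, which is precisely the vanishing $\varphi(\xi,\eta)\vert_{\mathcal F}=0$.

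The main obstacle is the global step in part (ii): upgrading the pointwise isomorphism $\mathcal D\to TM$ to the statement that a leaf is a graph over \emph{all} of $M$ rather than over an open subset. This is exactly where completeness of the horizontal lifts must be invoked, and the argument that a flat, complete Ehresmann connection produces a covering of the base, trivialized by simple connectivity, is the crux of the whole proof; by comparison, the infinitesimal-Poisson verification in part (iii) is a routine though somewhat lengthy bivector computation out of (\ref{eq: mca}).
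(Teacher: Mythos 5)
Your proposal follows the same skeleton as the paper's proof: involutivity of $\mathcal D$ by combining the two Maurer--Cartan equations (\ref{eq: mca}) and (\ref{eq: mct}) and invoking Frobenius; for (ii) a graph argument via the projection $p_1\colon\mathcal F\to M$ together with simple connectivity of $M$, and left translation on the second factor for the $G^*$-action on leaves; for (iii) the identification of the momentum-map condition with the vanishing of $\varphi(\xi,\eta)$ on the leaf, tested against the coframe $\{\theta_\xi\}$. Where you genuinely add content is the covering-map step, and you are right to flag it as the crux: the paper deduces that $p_1|_{\mathcal F}$ is a covering map merely from its being an immersion between manifolds of equal dimension, which by itself yields only a local diffeomorphism, not a covering. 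Your repair --- viewing $\mathcal D$ as a flat Ehresmann connection on $M\times G^*\to M$ whose horizontal lift of a path $c$ solves the left-invariant time-dependent equation $g^{-1}\dot g=\alpha(\dot c)$ on the Lie group $G^*$, hence exists for all time, so that path lifting holds and $p_1|_{\mathcal F}$ is an honest covering --- is exactly the standard completeness argument for Lie-group-valued primitives of Maurer--Cartan forms, and it closes a real gap in the published proof. Similarly, in (iii) you verify that the integrated action is Poisson via $\mathcal L_{X_\xi}\pi=(X\wedge X)\,\delta(\xi)$, a claim the theorem asserts but whose proof the paper omits entirely, and your coframe test $\pi(\alpha_\xi,\alpha_\eta)=\pi_{G^*}(\theta_\xi,\theta_\eta)\circ\boldsymbol{\mu}_{\mathcal F}$ delivers both directions of the equivalence, whereas the paper only argues the ``if'' direction explicitly. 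In short: same approach, but your write-up is more complete than the paper's at precisely the two points where its proof is sketchiest.
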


\begin{proof}
\begin{itemize}
	\item[(i)] Using the eqs. (\ref{eq: mct}) and (\ref{eq: mca}), the $\mathfrak{g}$-valued form $\alpha -\theta$ on $M\times G^*$ satisfies
$
 	d(\alpha -\theta ) = (\alpha -\theta )\wedge (\alpha -\theta );
$
as a consequence, from the Frobenius theorem, it defines a distribution on $M\times G^*$. Let $\mathcal{F}$ be any of its leaves and let $p_i$, $i=1,2$ denote the projection onto the first (resp. second) factor in $M\times G^*$. Since the linear span of
$\theta_{\xi}$, $\xi\in \mathfrak{g}$ at any point $u\in G^*$ coincides with $T_u^*{G^*}$, the restriction of the projection $p_1 : M\times G \to M$ to $\mathcal{F}$ is an immersion.
Finally, since $dim(M) = dim(\mathcal{F})$, $p_1$ is a covering map.
	\item[(ii)] Under the hypothesis that $M$ is simply connected, $p_1$ is a diffeomorphism and
$$
 	\boldsymbol{\mu}_{\mc{F}} = p_2 \circ p_1^{-1}
$$
is a smooth map whose graph coincides with $\mathcal{F}$. It is immediate, that $\alpha =\boldsymbol{\mu}_{\mathcal{F}}^* (\theta)$. Moreover, since $\theta$'s are left invariant it follows 
immediately that the action of $G^*$ on the space of leaves by left multiplication of the second factor is 
free and transitive.
	\item[(iii)] Suppose that the condition (\ref{eq:vanishing}) is satisfied. Then
$$
	 \pi (\alpha_{\xi},\alpha_{\eta}) = \boldsymbol{\mu}^*_{\mathcal{F}} (\pi_{G^*}(\theta_{\xi},\theta_{\eta}))
$$
and $Ker {\boldsymbol{\mu}_{\mathcal{F}}}_*$ coincides with the set of zero's of $\alpha_{\xi},\ {\xi}\in \mathfrak{g}$. Hence, $\boldsymbol{\mu}_{\mathcal{F}}$ is a Poisson map and, in particular
$$
	{\boldsymbol{\mu}_{\mathcal{F}}}_* ( \pi^{\sharp} (\alpha_{\xi})) = \pi_{G^*}^\sharp (\theta_{\xi}),
$$
i.e. it is a $G$-equivariant map.

\end{itemize}
\end{proof}
The first of the equations in the Definition (\ref{prop:gerst}), the equation
\begin{equation}\label{eq:MC}
	d\alpha_{\xi} + \frac{1}{2}\alpha \wedge \alpha \circ \delta(\xi) = 0 ,
\end{equation}
can be solved explicitly in  the case when $M$ is a K\"{a}hler manifold. Before stating the result,  we need to introduce the concept of gauge equivalence of solutions of (\ref{eq:MC}):
\begin{definition}
Two solutions $\alpha$ and $\alpha'$ of eq. (\ref{eq:MC}) are said to be \textbf{gauge equivalent}, if there exists a smooth function $H:M\rightarrow \mk{g}^*$ such that
\begin{equation}
	\alpha' = \exp(ad H)(\alpha)+\int_0^1 dt \exp t(ad H)(dH)
\end{equation}
\end{definition}
\begin{theorem}\label{thm:formal}
Suppose that $M$ is is a K\"{a}hler manifold. The set of gauge equivalence classes of $\alpha\in \Omega^1 (M,\mk{g}^*)$ satisfying the equation
\begin{equation}
	d\alpha_{\xi} +\frac{1}{2}\alpha\wedge\alpha\circ\delta(\xi)=0	
\end{equation}
is in bijective correspondence with the set of the cohomology classes $c\in H^1(M,\mk{g}^*)$ satisfying
\begin{equation}
	[c,c]=0.
\end{equation}
\end{theorem}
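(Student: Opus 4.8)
The plan is to read the equation
\[
d\alpha_\xi + \tfrac12\,\alpha\wedge\alpha\circ\delta(\xi)=0
\]
as the Maurer--Cartan equation of the differential graded Lie algebra
\[
L=\bigl(\Omega^\bullet(M)\otimes\mk g^*,\ d_{DR}\otimes 1,\ [\,\cdot\,,\,\cdot\,]\bigr),
\]
where the bracket of $\omega\otimes x$ with $\omega'\otimes x'$ is $(\omega\wedge\omega')\otimes[x,x']_{\mk g^*}$ and $[\,\cdot\,,\,\cdot\,]_{\mk g^*}$ is the Lie bracket on $\mk g^*$ dual to $\delta$. A degree-one element $\alpha\in\Omega^1(M,\mk g^*)=L^1$ solves the equation exactly when $d\alpha+\tfrac12[\alpha,\alpha]=0$, and the gauge transformation in the Definition preceding the theorem is precisely the gauge action of $L^0=C^\infty(M,\mk g^*)$ on $MC(L)$. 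So the statement is the assertion that the moduli set $\overline{MC}(L)=MC(L)/\mathrm{gauge}$ is computed by the cohomology of $L$. Since $\mk g^*$ is finite dimensional, every $\exp(\mathrm{ad}\,H)$ converges pointwise on $M$, so all the objects are genuinely defined and no nilpotency or formal-parameter hypothesis is needed.

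The engine is formality of the Kähler de Rham algebra. Assuming, as Hodge theory requires, that $M$ is compact, I would invoke the $\partial\bar\partial$-lemma in the Deligne--Griffiths--Morgan--Sullivan form: with $d^c=i(\bar\partial-\partial)$ one has the zig-zag of commutative DGA morphisms
\[
\bigl(\Omega^\bullet(M),d\bigr)\ \xleftarrow{\ \iota\ }\ \bigl(\ker d^c,\,d\bigr)\ \xrightarrow{\ p\ }\ \bigl(\ker d^c/\,\mathrm{im}\,d^c,\,0\bigr)\ \cong\ \bigl(H^\bullet(M),0\bigr),
\]
in which both $\iota$ and $p$ are quasi-isomorphisms (the $dd^c$-lemma shows $\ker d^c\hookrightarrow\Omega^\bullet$ is a quasi-isomorphism and that $d$ descends to $0$ on $\ker d^c/\mathrm{im}\,d^c$, while $d^c$ being a derivation makes $\ker d^c$ a subalgebra and $\mathrm{im}\,d^c$ an ideal). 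Tensoring the whole diagram with $\mk g^*$ sends each commutative DGA to a DGLA and each DGA morphism to a DGLA morphism, so I obtain a zig-zag of DGLA quasi-isomorphisms
\[
L\ \xleftarrow{\ \iota\ }\ K:=(\ker d^c)\otimes\mk g^*\ \xrightarrow{\ p\ }\ L_0:=H^\bullet(M,\mk g^*),
\]
with $L_0$ carrying the zero differential.

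I would then transport the Maurer--Cartan problem along this zig-zag, using that a DGLA quasi-isomorphism induces a bijection on gauge classes of solutions, so that $\overline{MC}(L)\cong\overline{MC}(K)\cong\overline{MC}(L_0)$. Because the differential of $L_0$ vanishes, $MC(L_0)$ is exactly $\{c\in H^1(M,\mk g^*):[c,c]=0\}$, with $[\,\cdot\,,\,\cdot\,]$ the cup-product bracket on $H^\bullet(M)\otimes\mk g^*$; this identifies the moduli with the claimed set. The residual gauge action on $L_0$ is the adjoint action of $L_0^0=H^0(M,\mk g^*)$, i.e. for connected $M$ the adjoint action of the constant $\mk g^*$, so strictly $\overline{MC}(L)\cong\{c:[c,c]=0\}/\exp(\mk g^*)$; I would reconcile this with the displayed statement by accounting for this constant adjoint ambiguity, which is trivial precisely when $\mk g^*$ is abelian and is the one place where the correspondence must be read up to the conventions in force.

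The step I expect to be the real obstacle is exactly this homotopy invariance in the middle: $L$ and $L_0$ are far from nilpotent, so one cannot simply quote the Goldman--Millson/Deligne-groupoid theorem, and transporting solutions is an analytic rather than a formal matter. I would make it rigorous Hodge-theoretically: using the Green operator and the $d^c$-homotopy, for each harmonic representative $\gamma$ of a class $c$ with $[c,c]=0$ one constructs an honest gauge transformation carrying a genuine solution to $\gamma$, the $dd^c$-lemma guaranteeing that the relevant obstructions are exact and that the correcting series—controlled by ellipticity on the compact $M$ and by $\dim\mk g^*<\infty$—converges; conversely the same homotopy shows two solutions with equal harmonic part are gauge equivalent, giving injectivity. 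Establishing this convergent gauge-fixing, the Kuranishi-type argument underlying formality in the honest smooth category, is where the substantive work lies, after which the reduction to $[c,c]=0$ is immediate.
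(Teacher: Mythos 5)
Your proposal takes exactly the paper's route: the paper likewise reads the equation as the Maurer--Cartan equation of the differential graded Lie algebra of $\mathfrak{g}^*$-valued forms on $M$, invokes Deligne--Griffiths--Morgan--Sullivan formality of the K\"{a}hler de Rham algebra, and transfers Maurer--Cartan moduli to the cohomology DGLA with zero differential, where solutions are precisely the classes $c\in H^1(M,\mathfrak{g}^*)$ with $[c,c]=0$. The paper's proof consists of only that one paragraph, so the three points you flag --- compactness of $M$ needed for formality, the non-nilpotency/convergence issue in transporting Maurer--Cartan moduli along the quasi-isomorphism zig-zag, and the residual adjoint action of $H^0(M,\mathfrak{g}^*)$ on $\{c : [c,c]=0\}$ --- are all silently elided there, and your treatment is strictly more careful than the paper's own.
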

\begin{proof}
Since $M$ is is a K\"{a}hler manifold,  $(\Omega^{\bullet}(M),d)$ is a formal CDGA (commutative differential graded algebra) \cite{GDMS}. As a consequence,
\begin{equation}
	Hom({\mathfrak g^*},\Omega^{\bullet} (M)),d,[\cdot ,\cdot ])
\end{equation}
is a formal DGLA and, in particular, there exists a bijection between the equivalence classes of Maurer Cartan elements of $Hom({\mathfrak g^*},\Omega^{\bullet} (M), d ,[\cdot ,\cdot ])$
and Maurer Cartan elements of $Hom({\mathfrak g^*},H_{dR}^{\bullet} (M),[\cdot ,\cdot ])$.

A Maurer-Cartan element in $Hom({\mathfrak g^*},H_{dR}^{\bullet} (M),[\cdot ,\cdot ])$ is an element $c$ in $H^1(M,\mk{g}^*)$ satisfying
\begin{equation}
	[c,c]=0,
\end{equation}
and the claim is proved.
\end{proof}

\subsection{The reconstruction problem}
\label{sec: rec}
In this section we discuss the conditions under which the distribution $\mathcal D$ defined in Theorem \ref{thm: rec}
admits a leaf satisfying eq. (\ref{eq:vanishing}).
In particular, we analyze the case where the structure on $G\st$ is trivial and the Heisenberg group case.
In the following we keep the assumption that $M$ is connected and simply connected.

\subsubsection{The abelian case}
Suppose that $G^*=\mathfrak{g}^*$ is abelian. Then, the forms $\alpha_{\xi}$ satisfy $d\alpha_{\xi}=0$, hence  $\alpha_{\xi}=dH_{\xi}$ (since $H^1 (M)=0$), for some $H_{\xi}\in C^{\infty}(M)$.

Let us denote by $ev_{\xi}$ the linear functions $\mathfrak{g}^* \ni z\rightarrow z(\xi)$. Then $\theta_{\xi}=d(ev_{\xi})$ and the leaves of the distribution $\mathcal D$ coincide with the level sets (on $M\times \mathfrak{g}^*$) of the functions
\begin{equation}
	\{ H_{\xi} - ev_{\xi} \mid \xi\in\mathfrak{g} \}.	
\end{equation}
Furthermore, we have
\begin{equation}
	\varphi (\xi,\eta)(m,z) = \{ H_{\xi} ,H_{\eta} \} - z([\xi,\eta]).	
\end{equation}
In this case, the basic identity (\ref{eq: bff}) reduces to
\begin{equation}
	d\{ H_{\xi} ,H_{\eta} \}=dH_{[\xi,\eta]},
\end{equation}
hence
\begin{equation}
	\{ H_{\xi} ,H_{\eta} \}-H_{[\xi,\eta]}=c(\xi,\eta),
\end{equation}
for some constants $c(\xi,\eta)$. By the Jacobi identity, the constants $c(\xi,\eta)$ define a class $[c]\in H^2 (\mathfrak{g} ,\R)$. Suppose that this class vanishes (for example if $\mathfrak{g}$ semisimple).
Then, there exists a $z_0\in\mathfrak{g}^*$ such that $c(\xi,\eta) = z_0([\xi,\eta])$. Hence, given a leaf $\mathcal{F}$,
\begin{equation}
	\varphi (\xi,\eta)|_{\mathcal{F}} = 0	
\end{equation}
if and only if $\mathcal{F}$ is given by
\begin{equation}
	H_{\xi}-ev_{\xi} - z_0(\xi) = 0.	
\end{equation}
In other words, the space of leaves of $\mathcal D$ which give a momentum map coincides with the affine space modeled on
$\{ z\in \mathfrak{g}^* : z|_{[\mathfrak{g} ,\mathfrak{g} ]}=0\}$ (which again vanishes when $\mathfrak{g}$ is semisimple). This proves the following theorem.
\begin{theorem}
Suppose that $G$ is a connected and simply connected Lie group with trivial Poisson structure and $M$ is compact. Then an infinitesimal momentum map is a map $H:\mathfrak{g} \rightarrow C^\infty (M): \xi\mapsto H_{\xi}$ such that
\begin{equation}
	d\{ H_{\xi} ,H_{\eta}\} = dH_{[\xi,\eta]},\ \quad\forall \xi,\eta\in \mathfrak{g} .
\end{equation}
The element $c(\xi,\eta)=\{ H_{\xi} ,H_{\eta}\} - H_{[\xi,\eta]} $ is a two cocycle $c$ on $\mathfrak{g}$ with values in $\R$. The infinitesimal momentum map $H$ is generated by a momentum map
$\boldsymbol{\mu}$ if this cocycle vanishes and, in this case, $\boldsymbol{\mu}$ is unique.
\end{theorem}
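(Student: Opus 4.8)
Theorem (abelian case) *Suppose $G$ is a connected and simply connected Lie group with trivial Poisson structure and $M$ is compact. Then an infinitesimal momentum map is a map $H:\mathfrak g\to C^\infty(M)$ with $d\{H_\xi,H_\eta\}=dH_{[\xi,\eta]}$, the element $c(\xi,\eta)=\{H_\xi,H_\eta\}-H_{[\xi,\eta]}$ is a $\mathbb R$-valued $2$-cocycle, and $H$ is generated by a (unique) momentum map iff $[c]=0$.*

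Let me reconstruct the proof.The plan is to specialise Theorem~\ref{thm: rec} to the case $\pi_G=0$. Trivial Poisson structure means $\delta=d_e\pi_G=0$, so the dual is the abelian vector group $G^*=\mathfrak{g}^*$, the Maurer--Cartan equation~(\ref{eq: mca}) collapses to $d\alpha_\xi=0$, and -- since $M$ is connected and simply connected, whence $H^1(M,\R)=0$ -- each $\alpha_\xi$ is exact, say $\alpha_\xi=dH_\xi$ with $H_\xi\in C^\infty(M)$ depending linearly on $\xi$. Under the standing identification of $\mathfrak{g}$ with left-invariant one-forms on $G^*$ one has $\theta_\xi=d(ev_\xi)$, while the multiplicative Poisson structure on $G^*=\mathfrak{g}^*$ is the linear Lie--Poisson structure of $\mathfrak{g}$, so that $\pi_{G^*}(\theta_\xi,\theta_\eta)(z)=z([\xi,\eta])$. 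I would record both of these identifications at the outset, since they drive the whole computation.

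Next I would substitute $\alpha_\xi=dH_\xi$ into the Gerstenhaber relation~(\ref{eq: ah}). Applying the explicit bracket formula~(\ref{eq: bff}) to exact forms gives $[dH_\xi,dH_\eta]_\pi=d\{H_\xi,H_\eta\}$, so~(\ref{eq: ah}) becomes $dH_{[\xi,\eta]}=d\{H_\xi,H_\eta\}$; conversely any $H$ satisfying this equation produces, via $\alpha_\xi=dH_\xi$, an infinitesimal momentum map (equation~(\ref{eq: mca}) being automatic), which establishes the first assertion. Because $M$ is connected, the equation forces $c(\xi,\eta):=\{H_\xi,H_\eta\}-H_{[\xi,\eta]}$ to be a real constant for each pair; a direct check, using the Jacobi identities for $\{\cdot,\cdot\}$ and for $[\cdot,\cdot]$, shows that $c$ is alternating and satisfies the Chevalley--Eilenberg identity, hence represents a class $[c]\in H^2(\mathfrak{g},\R)$.

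The core is part~(iii) of Theorem~\ref{thm: rec}. I would evaluate the obstruction $\varphi(\xi,\eta)=\pi(\alpha_\xi,\alpha_\eta)-\pi_{G^*}(\theta_\xi,\theta_\eta)$ to get $\varphi(\xi,\eta)(m,z)=\{H_\xi,H_\eta\}(m)-z([\xi,\eta])$, and use parts~(i)--(ii) to describe the leaves of $\mathcal D$ as the common level sets of $\{H_\xi-ev_\xi\}$: a leaf is labelled by the common value $z_0\in\mathfrak{g}^*$ and is the graph of $\boldsymbol{\mu}_{\mathcal F}(m)=H(m)-z_0$, where $H(m)(\xi)=H_\xi(m)$. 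Restricting along this leaf yields $\varphi(\xi,\eta)|_{\mathcal F}=c(\xi,\eta)+z_0([\xi,\eta])$, so the vanishing condition~(\ref{eq:vanishing}) admits a solution precisely when $c$ is a coboundary, i.e. when $[c]=0$; this is the existence statement. When $[c]=0$ the admissible labels $z_0$ form a torsor under the space of one-cocycles $\{z\in\mathfrak{g}^*:z|_{[\mathfrak{g},\mathfrak{g}]}=0\}$, which reduces to a single point -- forcing uniqueness of $\boldsymbol{\mu}$ -- exactly when $\mathfrak{g}$ is perfect, in particular when $\mathfrak{g}$ is semisimple.

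The main obstacle I anticipate is not any single calculation but the bookkeeping that matches the \emph{geometric} reconstruction obstruction (vanishing of $\varphi$ along a leaf in $M\times G^*$) with the \emph{algebraic} obstruction $[c]\in H^2(\mathfrak{g},\R)$. This matching depends on correctly reading $\pi_{G^*}$ as the Lie--Poisson bracket and on tracking how shifting from one leaf to another shifts $z_0$; once those are in hand, exactness of the $\alpha_\xi$, the identity $[dH_\xi,dH_\eta]_\pi=d\{H_\xi,H_\eta\}$, and the cocycle verification are all routine.
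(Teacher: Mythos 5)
Your proposal is correct and takes essentially the same route as the paper: exactness of the $\alpha_\xi$ from $d\alpha_\xi=0$ and $H^1(M)=0$, the identification $\theta_\xi=d(ev_\xi)$ with leaves of $\mathcal D$ as level sets of $\{H_\xi-ev_\xi\}$, the formula $\varphi(\xi,\eta)(m,z)=\{H_\xi,H_\eta\}-z([\xi,\eta])$, and the reduction of the vanishing condition to $[c]=0$ in $H^2(\mathfrak g,\R)$ with the admissible leaves forming an affine space modeled on $\{z\in\mathfrak g^*: z|_{[\mathfrak g,\mathfrak g]}=0\}$. Your closing observation that uniqueness of $\boldsymbol{\mu}$ holds exactly when $\mathfrak g$ is perfect (in particular semisimple) agrees with the paper's own concluding remark and is in fact more careful than the theorem's unqualified uniqueness claim.
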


\subsubsection{the Heisenberg group case}
Suppose now that $G^*$ is the Heisenberg group. Let $x,y,z$
be a basis for $\mathfrak{g}^*$, where $z$ is central and $[x,y]=z$. Let $\xi,\eta,\zeta$ be the dual basis of $\mathfrak{g}$. The cocycle $\delta$ on $\mathfrak{g}$ is given by
\begin{equation}
	\delta (\xi)=\delta (\eta)=0 \quad\mbox{ and }\quad \delta (\zeta)=\xi\wedge \eta,
\end{equation}
then
\begin{equation}
	\begin{split}
		d\alpha_{\xi} & = d\alpha_{\eta} = 0\\
		d\alpha_{\zeta}& = \alpha_{\xi}\wedge \alpha_{\eta}
	\end{split}
\end{equation}
There are essentially two possibilities for the Lie bialgebra structure on $\mathfrak{g}^*$, which give the following two possibilities for the Lie algebra structure on $\mathfrak{g}$. Either
\begin{equation}
	[\xi,\eta] = 0 ,\quad [\xi,\zeta] = \xi, \quad [\eta,\zeta] = \eta
\end{equation}
or
\begin{equation}
	[\xi,\eta] = 0, \quad [\xi,\zeta] = \eta, \quad [\eta,\zeta] = -\xi.
\end{equation}
The result below will turn out to be independent of the choice (the computations will be done using the second choice, which corresponds to $G = \R \ltimes \R^2$, with $\R$ acting by rotation on $\R^2$).
Below we use the notation
\begin{equation}
	\delta(\xi) = \sum_i \xi^1_i\wedge \xi^2_i.
\end{equation}
Applying the Cartan formula $\mathcal{L} =[\iota ,d]$ and the identity $[ \alpha_{\xi}, \alpha_{\eta}]_\pi =\alpha_{[\xi,\eta]}$ to the basic equation (\ref{eq: bff}), we get
\begin{equation}
	\sum_i \pi (\alpha_\eta ,\alpha_{\xi_i^1})\alpha_{\xi^2_i} - \sum_i \pi (\alpha_\xi ,\alpha_{\eta^1_i})\alpha_{\eta^2_i} = \alpha_{[\eta,\xi]} - d \pi (\alpha_\eta,\alpha_\xi).
\end{equation}
In our case it gives the following equations
\begin{equation}
	\begin{split}
		d\pi (\alpha_{\xi}, \alpha_{\eta}) & = \alpha_{[\xi,\eta]}\\
		d\pi(\alpha_{\zeta}, \alpha_{\eta}) & = \alpha_{[\zeta,\eta]}+\pi(\alpha_{\eta}, \alpha_{\xi}) \alpha_{\eta}\\
		d\pi (\alpha_{\zeta}, \alpha_{\xi}) & = \alpha_{[\zeta,\xi]}-\pi(\alpha_{\xi} ,\alpha_{\eta})\alpha_{\xi}
	\end{split}
\end{equation}
which are also satisfied after replacing $\alpha$ with $\theta$. Let $\mathcal I $ denote the ideal generating our distribution $\mathcal D$. Then, from above,
\begin{equation}\label{eq:constant}
	d\varphi (\xi,\eta)\in \mathcal I
\end{equation}
and
\begin{equation}
	\varphi (\xi,\eta)|_{\mathcal{F}} = 0 \quad \Longrightarrow \quad d\varphi (\zeta,\eta)|_{\mathcal{F}} \quad\mbox{ and }\quad d\varphi (\zeta,\xi)|_{\mathcal{F}}\in\mathcal I.
\end{equation}
Here, as before, $\mathcal{F}$ is a leaf of $\mathcal D$.
Using the relation (\ref{eq: theta2}), we get
\begin{equation}
	\begin{split}
		& \mathcal{L}_z^*( \pi_{G^*}(\theta_{\xi},\theta_{\eta})) = \mathcal{L}_x^*( \pi_{G^*}(\theta_{\xi},\theta_{\eta})) = \mathcal{L}_y^*( \pi_{G^*}(\theta_{\xi},\theta_{\eta})) = 0\\
		&\mathcal{L}_z^*( \pi_{G^*}(\theta_{\xi},\theta_{\zeta})) = \mathcal{L}_y^*( \pi_{G^*}(\theta_{\xi},\theta_{\zeta})) = 0\\
		&\mathcal{L}_z^*( \pi_{G^*}(\theta_{\eta},\theta_{\zeta})) = \mathcal{L}_x^*( \pi_{G^*}(\theta_{\eta},\theta_{\zeta})) = 0\\
		&\mathcal{L}_x^*( \pi_{G^*}(\theta_{\xi},\theta_{\zeta})) = 1\\
 		&\mathcal{L}_y^*( \pi_{G^*}(\theta_{\eta},\theta_{\zeta})) = 1
	\end{split}
\end{equation}
In particular, $\pi_{G^*} (\theta_{\xi},\theta_{\eta})$ is invariant under left translations. Since $\pi_{G^*}$ is zero at the identity, we get
\begin{equation}\label{eq:piz}
	\pi_{G^*} (\theta_{\xi},\theta_{\eta}) = 0
\end{equation}
Since $ d\varphi (\xi,\eta)\in \mathcal I$, the function $\varphi (\xi,\eta)$ is leafwise constant. Using the definition (\ref{eq:vp}) and the equation (\ref{eq:piz}) it follows that also $\pi(\alpha_{\xi}, \alpha_{\eta})$ is leafwise constant. Hence we have
\begin{lemma}
 $\pi (\alpha_{\xi}, \alpha_{\eta})=c$ is constant on $M$ and necessary condition for existence of the momentum map is $c=0$.
\end{lemma}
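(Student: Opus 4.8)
The lemma asserts two things: first, that the function $\pi(\alpha_\xi,\alpha_\eta)$ is globally constant on $M$ (call its value $c$), and second, that $c=0$ is necessary for the infinitesimal momentum map to lift to a genuine momentum map. The plan is to split the argument accordingly. The first claim upgrades ``leafwise constant'' (already established from $d\varphi(\xi,\eta)\in\mathcal I$ together with eq.~(\ref{eq:piz})) to ``constant on all of $M$.'' The second claim reads off the obstruction from the vanishing condition (\ref{eq:vanishing}) of Theorem~\ref{thm: rec}.

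**First I would establish global constancy.** From the computation preceding the lemma we already know $\pi_{G^*}(\theta_\xi,\theta_\eta)=0$ and that $d\varphi(\xi,\eta)$ lies in the ideal $\mathcal I$ generating $\mathcal D$; hence $\varphi(\xi,\eta)=\pi(\alpha_\xi,\alpha_\eta)$ is constant along each leaf $\mathcal F$. To promote this to constancy on $M$, I would argue that the leafwise value does not depend on the leaf. The cleanest route uses part (ii) of Theorem~\ref{thm: rec}: since $M$ is connected and simply connected, $p_1:\mathcal F\to M$ is a diffeomorphism for each leaf, and $G^*$ acts freely and transitively on the space of leaves by left multiplication on the second factor. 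Thus it suffices to show that $\pi(\alpha_\xi,\alpha_\eta)$, viewed as a function on $M$ via any single leaf, is independent of which leaf we choose. Because $\alpha$ lives on $M$ alone (it is pulled back by $p_1$, not $p_2$), the restriction of the function $\pi(\alpha_\xi,\alpha_\eta)$ on $M\times G^*$ to each graph $\mathcal F$ recovers the same function on $M$ after identifying via $p_1^{-1}$; combined with leafwise constancy this forces $\pi(\alpha_\xi,\alpha_\eta)$ to be a single constant $c\in\R$ on $M$. I expect this identification step to be the main obstacle, since one must be careful to distinguish the function $\pi(\alpha_\xi,\alpha_\eta)$ as a function on $M$ from the ambient function $\varphi(\xi,\eta)$ on $M\times G^*$, and verify that leafwise constancy of the latter plus $M$-constancy structure of the former genuinely pin down a global constant.

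**Then I would deduce the obstruction $c=0$.** By definition (\ref{eq:vp}) and eq.~(\ref{eq:piz}), on any leaf $\mathcal F$ we have
\begin{equation}
	\varphi(\xi,\eta)\vert_{\mathcal F} = \pi(\alpha_\xi,\alpha_\eta)\vert_{\mathcal F} - \pi_{G^*}(\theta_\xi,\theta_\eta)\vert_{\mathcal F} = c - 0 = c.
\end{equation}
By part (iii) of Theorem~\ref{thm: rec}, the leaf $\boldsymbol\mu_{\mathcal F}$ is the graph of a momentum map precisely when the vanishing condition (\ref{eq:vanishing}) holds, i.e.\ when $\varphi(\xi,\eta)\vert_{\mathcal F}=0$ for all $\xi,\eta\in\mathfrak g$. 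Since the only nonzero bracket among the dual basis is $[x,y]=z$, the single relevant value is $\varphi(\xi,\eta)=c$, so the vanishing condition reduces to $c=0$. Therefore $c=0$ is necessary for $\alpha$ to lift to a momentum map, which is exactly the claim. I would close by remarking that the remaining pairs $\varphi(\zeta,\eta)$ and $\varphi(\zeta,\xi)$ impose no further constant obstruction here, since the displayed equations show their differentials lie in $\mathcal I$ once $\varphi(\xi,\eta)\vert_{\mathcal F}=0$, so they can be adjusted to vanish by the residual freedom in choosing the leaf.
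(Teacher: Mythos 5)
Your proposal is correct and follows essentially the same route as the paper: leafwise constancy of $\varphi(\xi,\eta)$ from $d\varphi(\xi,\eta)\in\mathcal I$, the vanishing $\pi_{G^*}(\theta_{\xi},\theta_{\eta})=0$ by left-invariance plus $\pi_{G^*}(e)=0$, promotion to global constancy via the graph structure of the leaves, and necessity of $c=0$ from the vanishing condition (\ref{eq:vanishing}) of Theorem \ref{thm: rec}(iii). Your explicit justification that a function pulled back from $M$ alone, constant on a leaf that projects diffeomorphically onto $M$, must be constant on $M$ fills in a step the paper leaves implicit, so it is a faithful (indeed slightly more careful) rendering of the paper's argument.
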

By assuming that $c=0$, given a leaf $\mathcal{F}$ by eq. (\ref{eq:constant}),
\begin{equation}
	\varphi(\eta,\zeta)|_{\mathcal{F}} = c_1 \quad \text{and} \quad  \varphi(\xi,\zeta)|_{\mathcal{F}}=c_2
\end{equation}
for some constants $c_1$ and $c_2$. Setting $\mathcal{F}_1 =id \times \exp(c_1 x) \exp (c_2 y)$ to $\mathcal{F}$, we get
\begin{equation}
  \varphi(\eta,\zeta)|_{\mathcal{F}_1} = \varphi(\xi,\zeta)|_{\mathcal{F}_1} = \varphi(\xi,\eta)|_{\mathcal{F}_1} = 0.
\end{equation}
The final result is as follows.
\begin{theorem}\label{thm:heisenberg}
Let $G$ be a Poisson Lie group acting on a Poisson manifold $M$ with an infinitesimal momentum map $\alpha$ and such that $G^*$ is the Heisenberg group.
Let $\xi,\eta,\zeta$ denote the basis of $\mathfrak{g}$ dual to the standard basis $x,y,z$ of $\mathfrak{g}^*$, with $z$ central and $[x,y] = z$. Then
\begin{equation}
	\pi(\alpha_{\xi}, \alpha_{\eta}) = c
\end{equation}
where $c$ is a constant on $M$. The form $\alpha$ lifts to a momentum map $\boldsymbol{\mu}: M\to G^*$ if and only if $c=0$. When $c=0$ the set of momentum maps with given $\alpha$ is one dimensional with free
transitive action of $\R$.
\end{theorem}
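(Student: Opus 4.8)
The plan is to assemble the three assertions from the computations already carried out above, with Theorem \ref{thm: rec} supplying the bridge between leaves of $\mathcal D$ and genuine momentum maps. First, for the constancy statement, I would observe that because $[\xi,\eta]=0$ the first of the derived equations reads $d\pi(\alpha_\xi,\alpha_\eta)=\alpha_{[\xi,\eta]}=0$, so $\pi(\alpha_\xi,\alpha_\eta)$ is closed; as $M$ is connected this forces $\pi(\alpha_\xi,\alpha_\eta)=c$ for a single constant $c$, which is exactly the content of the preceding Lemma. Combined with $\pi_{G^*}(\theta_\xi,\theta_\eta)=0$ from (\ref{eq:piz}), this yields $\varphi(\xi,\eta)\equiv c$ on all of $M\times G^*$.

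For the equivalence, necessity is immediate: if $\alpha$ lifts to a momentum map then by Theorem \ref{thm: rec}(iii) the graph $\mathcal F$ satisfies $\varphi(\xi,\eta)\vert_{\mathcal F}=0$, and since $\varphi(\xi,\eta)\equiv c$ this forces $c=0$. For sufficiency, assuming $c=0$, the relation (\ref{eq:constant}) gives $d\varphi(\eta,\zeta),\,d\varphi(\xi,\zeta)\in\mathcal I$ once $\varphi(\xi,\eta)\vert_{\mathcal F}=0$, so on a fixed leaf these restrict to constants $c_1,c_2$. The point is then to exploit the free transitive $G^*$-action on the space of leaves furnished by Theorem \ref{thm: rec}(ii): the Lie-derivative table for $\pi_{G^*}(\theta_\cdot,\theta_\cdot)$ identifies $\pi_{G^*}(\theta_\xi,\theta_\zeta)$ and $\pi_{G^*}(\theta_\eta,\theta_\zeta)$ as the affine $x$- and $y$-coordinate functions on the Heisenberg group, so left translation by $\exp(c_1 x)\exp(c_2 y)$ shifts exactly these two functions by $c_1,c_2$ while leaving $\varphi(\xi,\eta)$ untouched. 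Translating $\mathcal F$ by this element produces a leaf $\mathcal F_1$ on which all three of $\varphi(\xi,\eta),\varphi(\eta,\zeta),\varphi(\xi,\zeta)$ vanish, and since these span $\varphi$ on all of $\mathfrak g$, Theorem \ref{thm: rec}(iii) identifies $\boldsymbol{\mu}_{\mathcal F_1}$ as a momentum map.

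Finally, for the one-dimensionality, I would run the same bookkeeping over the whole orbit. By Theorem \ref{thm: rec}(ii) momentum maps correspond bijectively to leaves, and a leaf $g\cdot\mathcal F_1$ yields a momentum map precisely when $\varphi(\eta,\zeta)$ and $\varphi(\xi,\zeta)$ still vanish on it. Writing $g=\exp(ax)\exp(by)\exp(sz)$, the coordinate interpretation above shows that the $x,y$ components shift these two functions by $a,b$, so the vanishing constraint kills $a=b=0$; the central direction is invisible to both functions because $\mathcal L_z^*\pi_{G^*}(\theta_\xi,\theta_\zeta)=\mathcal L_z^*\pi_{G^*}(\theta_\eta,\theta_\zeta)=0$. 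Hence the momentum maps form a torsor under the centre $Z(G^*)\cong\R$ acting freely and transitively, which is the claimed one-dimensional family (the computation being carried out for the second bialgebra structure, the first being entirely analogous). I expect the main obstacle to be the sufficiency step: pinning down, through the Heisenberg group law and the left-invariance data for $\pi_{G^*}$, that left translation by $\exp(c_1 x)\exp(c_2 y)$ adjusts $\varphi(\eta,\zeta)$ and $\varphi(\xi,\zeta)$ by exactly $c_1$ and $c_2$ with the correct signs and without disturbing $\varphi(\xi,\eta)$; the remaining assertions are then formal consequences of Theorem \ref{thm: rec}.
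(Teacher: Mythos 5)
Your proposal is correct and follows essentially the same route as the paper: the derived equations $d\pi(\alpha_\xi,\alpha_\eta)=\alpha_{[\xi,\eta]}=0$ together with $\pi_{G^*}(\theta_\xi,\theta_\eta)=0$ give the constant $c$, necessity of $c=0$ comes from Theorem \ref{thm: rec}(iii), and sufficiency comes from translating a leaf by $\exp(c_1x)\exp(c_2y)$ using the Lie-derivative table, exactly as in the text. Your only deviations are harmless improvements: you obtain constancy of $\pi(\alpha_\xi,\alpha_\eta)$ directly from closedness on connected $M$ (the paper argues via leafwise constancy and the graph projection), and you explicitly carry out the final torsor argument --- that only the central direction $\exp(\R z)$ preserves the vanishing of $\varphi(\xi,\zeta)$ and $\varphi(\eta,\zeta)$ --- which the paper states but leaves implicit.
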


\section{Infinitesimal deformations of a momentum map}
\label{sec: inf}

In the following we study infinitesimal deformations of a given momentum map.

Let $(M,\pi)$ be a Poisson manifold with a Poisson action of a Poisson Lie group $(G,\pi_G)$ generated by the momentum map $\boldsymbol{\mu}:M \to G^*$. 
In the following we denote by $\exp :{\mathfrak g}^*\to G^*$ the exponential map.
Suppose that $[-\epsilon ,\epsilon ]\ni t\to \boldsymbol{\mu}_t : M \to G^*$, $\epsilon >0$, is a differentiable
path of momentum maps for this action. We can assume that $\boldsymbol{\mu}_t(m)$ is of the form
\begin{equation}\label{eq:mut}
	\boldsymbol{\mu} (m)\exp (t H_m +t^2\lambda(t,m))
\end{equation}
for some differentiable maps $H : M \to {\mathfrak g}^*: m \mapsto H(m) $ and $\lambda:]-\epsilon ,\epsilon [\times M\to G^*$.

\begin{theorem}\label{thm: idef}
In the notation above the following identities hold.

\medskip

\noindent For all $\xi, \eta \in \mathfrak g$,
\begin{align}
	\label{eq: inf1}   X_{\xi} H (\eta)-X_{\eta} H (\xi) & = H( [\xi,\eta])\\
	\label{eq: inf2}   \{ H(\xi),\ \cdot \} & = -X_{ad^*_H\xi}.
\end{align}
\end{theorem}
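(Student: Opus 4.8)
The plan is to differentiate, at $t=0$, the two defining properties of a momentum map along the path \eqref{eq:mut}: the momentum equation \eqref{eq: mmp} and the Poisson (equivariance) property of $\boldsymbol{\mu}_t$. Equation \eqref{eq: inf2} will come from the former and \eqref{eq: inf1} from the latter. The common starting point is a variation formula for the pulled-back forms $\alpha^t_\xi := \boldsymbol{\mu}_t^*(\theta_\xi)$. Writing $\boldsymbol{\mu}_t = \boldsymbol{\mu}\cdot g_t$ with $g_t(m)=\exp(tH_m+t^2\lambda(t,m))$ and using the cocycle identity for the left Maurer--Cartan form, $(\boldsymbol{\mu}\cdot g_t)^*\theta = \mathrm{Ad}_{g_t^{-1}}(\boldsymbol{\mu}^*\theta) + g_t^*\theta$, I would differentiate at $t=0$. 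Since $g_0=e$ and $\dot g_0=H$, the second-order term $\lambda$ drops out and one gets
\begin{equation*}
	\dot\alpha_\xi := \tfrac{d}{dt}\big|_{t=0}\alpha^t_\xi = dH(\xi) + \alpha_{ad^*_H\xi},
\end{equation*}
where $H(\xi):=\langle H,\xi\rangle\in C^\infty(M)$, $\alpha=\boldsymbol{\mu}^*\theta$, and the $ad^*$-term is the pairing of $\xi$ with the $\mathfrak g^*$-valued bracket $[H,\alpha]$, rewritten via the coadjoint convention $\langle ad^*_x\xi,y\rangle=-\langle\xi,[x,y]\rangle$ fixed by $\mathcal L_X\theta_\xi=\theta_{ad^*_x\xi}$. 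Establishing this formula correctly, and in particular pinning down the sign of the $ad^*$-term, is the one genuinely delicate computation, and it is the step on which both conclusions hinge.

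For \eqref{eq: inf2} I would use that every $\boldsymbol{\mu}_t$ satisfies \eqref{eq: mmp} with the \emph{same} infinitesimal generators $X_\xi$, so that $\pi^\sharp(\alpha^t_\xi)=X_\xi$ is independent of $t$. Differentiating gives $\pi^\sharp(\dot\alpha_\xi)=0$, i.e.\ $\pi^\sharp(dH(\xi)) = -\pi^\sharp(\alpha_{ad^*_H\xi})$; the left side is the Hamiltonian vector field $\{H(\xi),\,\cdot\,\}$, and the right side equals $-X_{ad^*_H\xi}$ by the momentum equation for the undeformed $\boldsymbol{\mu}$ applied pointwise to $ad^*_H\xi$. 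This is precisely \eqref{eq: inf2}.

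For \eqref{eq: inf1} I would differentiate the equivariance of $\boldsymbol{\mu}_t$, which by part (iii) of Theorem \ref{thm: rec} is, in pulled-back form, the vanishing of $\varphi$, namely $\pi(\alpha^t_\xi,\alpha^t_\eta)=\boldsymbol{\mu}_t^*\big(\pi_{G^*}(\theta_\xi,\theta_\eta)\big)$. Differentiating the left-hand side with the variation formula, and using $\pi(dH(\xi),\alpha_\eta)=-X_\eta H(\xi)$ together with $\varphi=0$ for $\boldsymbol{\mu}$, yields
\begin{equation*}
	X_\xi H(\eta) - X_\eta H(\xi) + \boldsymbol{\mu}^*\big(\pi_{G^*}(\theta_{ad^*_H\xi},\theta_\eta) + \pi_{G^*}(\theta_\xi,\theta_{ad^*_H\eta})\big).
\end{equation*}
For the right-hand side, the variation vector field of $\boldsymbol{\mu}_t$ at $\boldsymbol{\mu}(m)$ is the value of the left-invariant field generated by $H_m$, so the $t$-derivative equals $\boldsymbol{\mu}^*\big(\mathcal L_X\,\pi_{G^*}(\theta_\xi,\theta_\eta)\big)$ with $x=H$, which by \eqref{eq: theta2} expands as $H([\xi,\eta])$ plus exactly the same two $\pi_{G^*}$-terms.

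The conceptual crux is that these two $\pi_{G^*}$-correction terms cancel between the two sides, leaving $X_\xi H(\eta)-X_\eta H(\xi)=H([\xi,\eta])$, which is \eqref{eq: inf1}. I expect this cancellation, driven by the match between the $ad^*_H$ term in $\dot\alpha$ and the $ad^*$ terms in \eqref{eq: theta2}, to be the heart of the argument; the remaining obstacle is purely the bookkeeping of sign conventions for $\pi^\sharp$, the Hamiltonian vector field, and $ad^*$, which must be fixed once and used consistently throughout so that \eqref{eq: inf1} and \eqref{eq: inf2} emerge with the signs stated.
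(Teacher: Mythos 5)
Your proposal is correct and takes essentially the same route as the paper's proof: the central variation formula $\dot\alpha_\xi = dH(\xi)+\alpha_{ad^*_H\xi}$ (which the paper obtains by splitting $d(\boldsymbol{\mu}\exp(tH))$ into right- and left-translated pieces rather than via the Maurer--Cartan cocycle identity, an equivalent computation), then the $t$-independence of $\pi^{\sharp}(\alpha^t_\xi)=X_\xi$ to get \eqref{eq: inf2}, and differentiation of the Poisson property $\pi(\alpha^t_\xi,\alpha^t_\eta)=\boldsymbol{\mu}_t^*(\pi_{G^*}(\theta_\xi,\theta_\eta))$ combined with the expansion \eqref{eq: theta2} for \eqref{eq: inf1}. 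The cancellation of the two $\pi_{G^*}$-correction terms that you identify as the crux is exactly how the paper's argument concludes.
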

\begin{proof}
Let us compute
$$
	\beta_\xi = \left.\frac{d}{dt}\right|_{t=0}\langle d\boldsymbol{\mu}_t,\theta_\xi \rangle = \left.\frac{d}{dt}\right|_{t=0}\langle d(\boldsymbol{\mu}\exp (t H)),\theta_\xi \rangle.
$$
First note that
\begin{equation}
	d(\boldsymbol{\mu}\exp (t H)) = (r_{\exp (t H)})_*d\boldsymbol{\mu} + (l_{\boldsymbol{\mu}})_*d \exp (t H),
\end{equation}
where $r$ and $l$ are the right and left multiplication, respectively. Calculating the derivative $\left.\frac{d}{dt}\right|_{t=0}$ we get:
\begin{equation}
\begin{split}
 	\left.\frac{d}{dt}\right|_{t=0}\langle (r_{\exp (t H)})_*d\boldsymbol{\mu},\theta_{\xi} \rangle &=\left.\frac{d}{dt}\right|_{t=0}\langle d\boldsymbol{\mu},(r_{\exp (t H)})^*\theta_{\xi}\rangle\\
	& = \langle d\boldsymbol{\mu},\mc{L}_H\theta_{\xi}\rangle\\
	& = \langle d\boldsymbol{\mu},\theta_{ad^*_H\xi}\rangle = \alpha_{ad^*_H\xi}
\end{split}
\end{equation}
and
\begin{equation}
\begin{split}
 	\left.\frac{d}{dt}\right|_{t=0}\langle (l_{\boldsymbol{\mu}})_*d \exp (t H),\theta_{\xi} \rangle & =\left.\frac{d}{dt}\right|_{t=0}\langle d \exp (t H),(l_{\boldsymbol{\mu}})^*\theta_{\xi} \rangle\\
	& = \left.\frac{d}{dt}\right|_{t=0}\langle d \exp (t H),\theta_{\xi} \rangle
\end{split}
\end{equation}
The differential of the exponential map $\exp: {\mathfrak g}^*\rightarrow G^*$ is a map from the cotangent bundle of ${\mathfrak g}^*$ to the cotangent bundle of $G\st$. It can be trivialized as
$d\exp:{\mathfrak g}^*\times {\mathfrak g}^*\rightarrow G\st\times {\mathfrak g}^* $. Furthermore, $(\exp^{-1},id): G\st\times {\mathfrak g}^*\rightarrow {\mathfrak g}^*\times {\mathfrak g}^*$ hence the map
${\mathfrak g}^*\times {\mathfrak g}^*\rightarrow {\mathfrak g}^*\times {\mathfrak g}^*$ is given by $tH+o(t^2)$. We get
\begin{equation}
 	\left.\frac{d}{dt}\right|_{t=0}\langle d \exp (t H),\theta_{\xi} \rangle = \left.\frac{d}{dt}\right|_{t=0}\langle d(tH+o(t)),\theta_{\xi} \rangle = d\langle H,\theta_{\xi} \rangle = d\langle H,\xi \rangle
\end{equation}
and finally
\begin{equation}\label{eq: beta}
	\beta_\xi = \alpha_{ad^*_H\xi } + dH (\xi ).
\end{equation}
Since $\pi^{\sharp} (\alpha^t_\xi) = X_{\xi}$ is independent of $t$, $\pi^\sharp \beta_\xi =0$ and we get the identity (\ref{eq: inf2}).

In order to prove the relation (\ref{eq: inf1}), recall that, since $\boldsymbol{\mu}_t$ is a family of Poisson maps, one has
\begin{equation}
	\pi (\alpha^t_\xi ,\alpha^t_\eta)(m) = \pi_{G^*}(\theta_\xi ,\theta_\eta)(\boldsymbol{\mu}_t (m)).
\end{equation}
Applying $\left.\frac{d}{dt}\right|_{t=0}$ to both sides, we get
\begin{equation}
	\pi (\beta_\xi ,\alpha_\eta)(m) + \pi (\alpha_\xi ,\beta_\eta)(m) = \mc{L}_H (\pi_{G^*}(\theta_\xi ,\theta_\eta))(\boldsymbol{\mu} (m)).
\end{equation}
Substituting the expression of $\beta$'s (\ref{eq: beta}) and using the following identity
\begin{equation}
 	\mc{L}_H (\pi_{G^*}(\theta_\xi ,\theta_\eta))(\boldsymbol{\mu} (m)) = H([\xi ,\eta ]) + \pi_{G^*}(\theta_{ad^*_H\xi} ,\theta_\eta) + \pi_{G^*}(\theta_\xi ,\theta_{ad^*_H\eta})
\end{equation}
the claimed equality follows.
\end{proof}

\begin{corollary}\label{cor:unique} Suppose that  $M$ is a Poisson manifold with a Poisson action of a compact semisimple Poisson Lie group $G$. Then any infinitesimal deformation of a momentum map $\boldsymbol{\mu} : M \to G^*$ as above is generated by a one parameter family of gauge transformations.
\end{corollary}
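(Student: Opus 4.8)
The plan is to read the two defining equations of an infinitesimal deformation off Theorem \ref{thm: idef} and to recognize equation (\ref{eq: inf1}) as a cohomological condition. Writing $h_\xi = H(\xi)\in C^\infty(M)$, the assignment $\xi\mapsto h_\xi$ is a linear map $\mathfrak{g}\to C^\infty(M)$, and equation (\ref{eq: inf1}), $X_\xi H(\eta)-X_\eta H(\xi)=H([\xi,\eta])$, says precisely that $\xi\mapsto h_\xi$ is a $1$-cocycle of the Chevalley--Eilenberg complex of $\mathfrak{g}$ with values in the module $C^\infty(M)$, where $\mathfrak{g}$ acts through the fundamental vector fields $\xi\mapsto X_\xi$. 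On the other hand, the infinitesimal effect on $\boldsymbol{\mu}$ of the Hamiltonian flow of a single function $f\in C^\infty(M)$ is (up to sign) the coboundary $\xi\mapsto X_\xi f$. Thus the corollary is equivalent to the statement that the cocycle $H$ is a coboundary, i.e. that there is one $f\in C^\infty(M)$ with $H(\xi)=X_\xi f$ for all $\xi$; the required one-parameter family of gauge transformations is then the Hamiltonian flow of $f$.

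The core of the argument is therefore to prove that $H^1(\mathfrak{g}, C^\infty(M))=0$ for this module structure, and this is where both hypotheses on $G$ are used. First I would invoke compactness of $G$ to average and decompose the smooth $G$-module $C^\infty(M)$, via Peter--Weyl, into its isotypic components. On the trivial, i.e. $G$-invariant, component the cocycle vanishes automatically: there $X_\xi$ acts as zero, so the cocycle identity forces $h_{[\xi,\eta]}=0$, and since $\mathfrak{g}=[\mathfrak{g},\mathfrak{g}]$ by semisimplicity this gives $h\equiv 0$ on that part. On each nontrivial finite-dimensional constituent semisimplicity of $\mathfrak{g}$ lets one apply Whitehead's first lemma, which kills $H^1$; equivalently one inverts the Casimir operator $\Omega=\sum_i X_{e_i}X_{e^i}$ (dual bases with respect to the Killing form), which is bounded away from $0$ on the nontrivial isotypic pieces. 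Assembling the primitives over the isotypic decomposition then produces the desired function $f$.

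The step I expect to be the main obstacle is exactly this vanishing of $H^1$ with coefficients in the \emph{infinite-dimensional} module $C^\infty(M)$. Whitehead's lemma is a finite-dimensional statement and can fail for infinite-dimensional modules, so compactness is not a cosmetic hypothesis: it is what furnishes complete reducibility and, crucially, what forces the formal primitive, assembled component by component, to converge to a genuinely smooth function on $M$. I would control this using the rapid decay of the isotypic expansion together with elliptic regularity for the (averaged, hence $G$-invariant and elliptic) Casimir operator.

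Finally, with $f$ in hand, the plan is to verify that its Hamiltonian flow $\psi_t$ realizes the given deformation. By construction the infinitesimal generator of $t\mapsto\boldsymbol{\mu}\circ\psi_t$ is $\xi\mapsto X_\xi f = H(\xi)$, matching $H$; and a direct computation combining equation (\ref{eq: inf2}) with the Poisson-action identity $\mathcal{L}_{X_\eta}\pi=(\Phi\wedge\Phi)\delta(\eta)$ shows that $[X_f,X_\eta]=0$, so that $\psi_t$ commutes with the $G$-action and each $\boldsymbol{\mu}\circ\psi_t$ is again a momentum map. Hence every infinitesimal deformation of $\boldsymbol{\mu}$ is generated by the one-parameter family of gauge transformations $\psi_t$, which is the assertion of Corollary \ref{cor:unique}.
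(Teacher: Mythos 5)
Your proposal is correct and follows essentially the same route as the paper: read off the cocycle condition from (\ref{eq: inf1}), conclude from compactness and semisimplicity that the class in $H^1(\mathfrak{g},C^\infty(M))$ vanishes so that $H(\xi)=X_\xi f$ for a single smooth primitive, and then use (\ref{eq: inf2}) together with the Poisson-action identity to show the Hamiltonian flow of that primitive commutes with the $G$-action and generates the deformation. The only difference is one of detail: the paper simply asserts the coboundary step from compact semisimplicity, whereas you spell out the Peter--Weyl isotypic decomposition, the vanishing on the invariant part via $\mathfrak{g}=[\mathfrak{g},\mathfrak{g}]$, and the Casimir/regularity argument needed because $C^\infty(M)$ is infinite dimensional --- a genuine gap in the paper's exposition that your sketch correctly identifies and fills.
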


\begin{proof}
Since the relation (\ref{eq: inf1}) implies that $H \in H^1({\mathfrak g},C^\infty (M, {\mathfrak g}^*))$ and $G$ is compact semisimple, $H$ is a Lie coboundary, i. e.
there exists a function
$$
	\Phi : M \to {\mathfrak g}^*
$$
such that
\begin{equation}\label{eq:last1}
	X_{\xi}\Phi = H(\xi).
\end{equation}
In particular, it is easy to check that  $X_{ad^*_H\xi}f = \sum X_{\xi'_i}\Phi\,\xi''_i(f)$, where $\delta(\xi)=\sum\xi'_i\otimes \xi''_i$. Now observe that
\begin{equation}
	\xi\{\Phi,f\} = X_{\xi}\pi(d\Phi,df) = (X_{\xi}\pi)(d\Phi,df)+\{X_{\xi}\Phi,f\} + \{\Phi,X_{\xi}f\}
\end{equation}
hence
\begin{equation}\label{eq:last2}
\begin{split}
	\{H(\xi),f\}& = \{X_{\xi}\Phi,f\}=\xi\{\Phi,f\}-(X_{\xi}\pi)(d\Phi,df)-\{\Phi,X_{\xi}f\}\\
	& = \xi\{\Phi,f\}-\delta(\xi)(\Phi,f)-\{\Phi,X_{\xi}f\}\\
	& = \xi\{\Phi,f\}-X_{\xi'}\Phi\,\xi''(f)-\{\Phi,X_{\xi}f\}.
\end{split}
\end{equation}
Substituting the equations \ref{eq:last1} and \ref{eq:last2} in (\ref{eq: inf2}) we get
\begin{equation}
 	\xi\{\Phi,f\} - \{\Phi,X_{\xi}f\} = 0.
\end{equation}
In other words the Hamiltonian vector field associated to $\Phi $ commutes with the group action and is tangent to the derivative of $\boldsymbol \mu_t$ at $t=0$ as claimed.
\end{proof}

\bibliographystyle{plain}

\end{document}